\documentclass[amsmath,amssymb,aps,prd,longbibliography,nofootinbib,12pt]{revtex4-2}

\usepackage{graphicx}
\usepackage{color}
\usepackage[utf8]{inputenc}
\usepackage[colorlinks,hyperindex]{hyperref}  
\hypersetup{
   colorlinks,%
   citecolor=blue,%
   linkcolor=blue,%
   urlcolor=blue,%
 }

\usepackage{accents}
\usepackage{enumitem}
\usepackage{xcolor}

 
\def\setR{\mathbb{R}}




\newcommand{\sss}[1]{\scriptscriptstyle #1}

\usepackage{amsthm}
    \theoremstyle{plain}
        \newtheorem{theorem}{Theorem}
        \newtheorem{proposition}{Proposition}

    \theoremstyle{definition}

    \theoremstyle{remark}

\begin{document}

\title{SO$(2,n)$-compatible embeddings of conformally flat $n$-dimensional submanifolds in $\setR^{n+2}$}

\author{E.~Huguet}
\affiliation{ Universit\'e Paris Cité, APC-Astroparticule et Cosmologie (UMR-CNRS 7164), 
Batiment Condorcet, 10 rue Alice Domon et L\'eonie Duquet, F-75205 Paris Cedex 13, France.}
\email{huguet@apc.univ-paris7.fr}
\author{J.~Queva}
\affiliation{Universit\'e de Corse -- CNRS UMR 6134 SPE, Campus Grimaldi BP 52, 20250 Corte, France.}
\email{queva@univ-corse.fr}
\author{J.~Renaud}
\affiliation{Universit\'e Gustave Eiffel, APC-Astroparticule et Cosmologie (UMR-CNRS 7164), 
Batiment Condorcet, 10 rue Alice Domon et L\'eonie Duquet, F-75205 Paris Cedex 13, France.} 
\email{jacques.renaud@u-pem.fr}

\begin{abstract}
    
    We describe embeddings of $n$-dimensional Lorentzian  manifolds, including Friedmann-Lemaître-Robertson-Walker spaces, in $\setR^{n+2}$
    such that the metrics of the submanifolds are inherited by a restriction from that of $\setR^{n+2}$, and the action of the linear group SO$(2,n)$ of the ambient space reduces to conformal transformations on the submanifolds.

\end{abstract}

\maketitle

\section{Introduction}
 It is well known that de Sitter space of dimension $n$ can be obtained
as a hyperboloid embedded in a $\setR^{n+1}$ Minkowski space. This short note
presents an observation to realize more general cosmological
Friedmann-Lema\^itre-Roberston-Walker (FLRW) spacetimes in a similar fashion in $\setR^{n+2}$, compatible with the action of the conformal group SO$(2,n)$.

Embeddings of conformally flat $n$-dimensional Lorentzian manifolds in higher-dimensional flat spacetimes have been considered in the past in various contexts \cite{Pavsic:2000qy,Seahra:2002ib,Gulamov:2011ux,
Paston:2013uia,PoncedeLeon:2015pug,Akbar:2017vja,Dunajski:2018xoa}. 
More often than not, these embeddings use convenient coordinate systems, in which metric the tensor of the embedded manifold can be extracted from the ambient metric by applying suitably chosen constraints.
The advantage of such a method is that it is fitted for specific practical applications. By contrast, some aspects, such as the action of the linear conformal group (when applicable) or a global geometric view, are much less apparent.

In this paper, in a coordinate free approach, we build the embedding in $\setR^{n+2}$ of $n$-dimensional conformally flat spaces, hereafter denoted by $W$, including FLRW spacetimes as special cases.  
While explicit embedding formulas date back to FLRW space's infancy \cite{Robertson:1928,Robertson:1933zz} and were rediscovered unknowingly later \cite{Rosen:1965},
here the embedding is as natural as possible, meaning that the metric on $W$ is the restriction of the  $\setR^{n+2}$ metric, and the action of the linear group SO$(2,n)$ of $\setR^{n+2}$ reduces to conformal transformations on $W$.
The present work originates from the need of a geometric  (coordinate free) framework for the generalization of  previous works regarding the restriction to submanifolds of differential operators in particular, the Laplace operator \cite{Huguet:2022rxi}.

The geometric setting with its definitions and conventions is exposed in Sec.~\ref{Sec-GeometricSettings}. The embeddings follow from Proposition~\ref{Prop-Main}, proved in Sec.~\ref{Sec-Embeddings}. 
The inverse problem, extensions, and relations to previous works are discussed in Sec.~\ref{Sec-Remarks}.

\section{Geometric setting}\label{Sec-GeometricSettings}

Let $\setR^{n+2}$ be a pseudo-Euclidean space of dimension $n+2$ equipped with the metric $\eta =+-\cdots -+$.
Throughout the paper,
$\mu, \nu, \ldots = 0, \ldots, n-1$  are related to $n$-dimensional manifolds,
and  $\alpha, \beta, \ldots = 0, \ldots, n+1$ 
to $\setR^{n+2}$.
The canonical coordinates of a point $y$ of $\setR^{n+2}$  
are denoted $\{y^{ \alpha}\}$,
and the associated Cartesian orthonormal basis is denoted $\{\partial_\alpha\}$. We denote by $D$ the dilation operator ; in  $y$ coordinates it reads $D = y^\alpha \partial_\alpha$.

We will denote by $X_h$ the $n$-dimensional  manifold obtained as the intersection 
of the $(n+1)$-dimensional null  cone of $\setR^{n+2}$:  $\mathcal{C}:=\{y\in \setR^{n+2}\, ;\,  C(y):=y^\alpha y_\alpha =  0\}$, 
and the surface  $P_h:= \{y\in \setR^{n+2} \, ;\,  h(y) = 1\}$, $h$ being a homogeneous function of degree one. 

\section{Embeddings}\label{Sec-Embeddings}

The SO$(2,n)$-compatible embedding of ($n$-dimensional) FLRW spaces  in $\setR^{n+2}$ is a special case of the following theorem :
\begin{theorem}\label{TH-theorem}
    Let $\setR^{n+2}$ be the $n+2$ dimensional real space with the pseudo-Euclidean metric $\eta$ with signature $(2,n)$. 
    Let $f$ and $l$ be two homogeneous functions of degrees one and zero, respectively, and $k=e^{-l}f$. 
    Let $g^f$ be the induced metric from $\setR^{n+2}$ to $X_f$, 
    with $g^k$ having the same relation with respect to $X_k$.
    Then, $g^k=e^{2l}{\tilde g}^f$, where ${\tilde g}^f$ is a metric on $X_k$ which is isometric to $g^f$.
    Moreover, the elements of the linear group SO$(2,n)$ on $\setR^{n+2}$ act on $X_k$ and $X_f$ as conformal transformations.
\end{theorem}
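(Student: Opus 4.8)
The plan is to exploit the fact that $X_f$ and $X_k$ are two sections of the \emph{same} null cone $\mathcal{C}$, transverse to its generating rays, so that they can be compared by sliding points radially along those rays. Because $\mathcal{C}$ is the zero set of the degree-two homogeneous function $C$, it is ruled by the null lines through the origin, and the dilation field $D=y^\alpha\partial_\alpha$ is everywhere tangent to $\mathcal{C}$. Two facts about $D$ on the cone drive the entire argument: first $\eta(D,D)=C=0$, and second, a vector $v$ is tangent to $\mathcal{C}$ at $y$ precisely when $dC(v)=2\,\eta(D,v)=0$, so $D$ lies in the radical of $\eta|_{\mathcal{C}}$ and is $\eta$-orthogonal to every tangent vector of any section. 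These two degeneracies are exactly what turn radial rescalings into \emph{conformal} rescalings of the induced metrics.

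For the first statement I would build the natural radial map $\phi:X_f\to X_k$. Since $k=e^{-l}f$ is again homogeneous of degree one, for $y\in X_f$ the scalar $\lambda$ solving $k(\lambda y)=1$ is $\lambda=1/k(y)=e^{l(y)}$, using $f(y)=1$ and the degree-zero homogeneity of $l$; hence $\phi(y)=e^{l(y)}y$, which lands on $X_k$. Differentiating $\phi^\alpha=e^{l}y^\alpha$ gives $d\phi(v)=e^{l}\bigl(dl(v)\,D+v\bigr)$ at $y$, and computing $\eta(d\phi(v),d\phi(v))$ while discarding the two terms that vanish on the cone (the $\eta(D,D)$ term and the $\eta(D,v)$ term) leaves $\eta(d\phi(v),d\phi(v))=e^{2l}\,\eta(v,v)$, i.e. $\phi^{*}g^{k}=e^{2l}g^{f}$. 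Setting $\tilde g^{f}:=(\phi^{-1})^{*}g^{f}$ makes $\phi$ an isometry of $(X_f,g^f)$ onto $(X_k,\tilde g^f)$, and since $l$ is ray-constant ($l(y)=l(\phi(y))$) the identity $\phi^{*}g^{k}=e^{2l}\phi^{*}\tilde g^{f}$ descends to $g^{k}=e^{2l}\tilde g^{f}$ on $X_k$, which is the claim.

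For the second statement the same mechanism applies with $\phi$ replaced by the induced group action. An element $\Lambda\in\mathrm{SO}(2,n)$ preserves $\eta$ and hence $C$, so it maps $\mathcal{C}$ to itself but pushes a section off itself; I would restore the section by a radial projection, setting $\Lambda_h(y)=\Lambda y/h(\Lambda y)$ on $X_h$. Differentiating as before, $d\Lambda_h(v)=\rho\,\Lambda v+(d\rho(v))\,\Lambda y$ with $\rho=1/h(\Lambda y)$, and since $\Lambda$ is an $\eta$-isometry the three resulting terms reduce, using $\eta(\Lambda y,\Lambda y)=\eta(y,y)=0$ and $\eta(\Lambda y,\Lambda v)=\eta(y,v)=0$, to $\eta(d\Lambda_h(v),d\Lambda_h(v))=\rho^{2}\,\eta(v,v)$. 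Thus $\Lambda_h^{*}g^{h}=h(\Lambda y)^{-2}g^{h}$ is conformal, for $h=f$ and $h=k$ alike.

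The computational core is routine; the real work lies in the well-posedness I would need to check. I expect the main obstacle to be transversality and domain issues: one must verify that each ray meets the section in a single point so that $\phi$ and $\Lambda_h$ are genuine diffeomorphisms, that $k(y)$ and $h(\Lambda y)$ do not vanish on the relevant domains so the radial projections are defined, and that the sections are truly transverse to $D$ so the induced metrics are nondegenerate (this is where the degree-zero hypothesis on $l$ and degree-one hypothesis on $f$ are essential). Upgrading the orthogonality $\eta(D,v)=0$ and the nondegeneracy of $g^{h}$ from a pointwise statement to one valid on all of $X_h$ is the step I would treat with the most care.
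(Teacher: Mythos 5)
Your proposal is correct and follows essentially the same route as the paper: the same radial map $y\mapsto e^{l(y)}y$ with the same pushforward formula, the same use of $\eta(D,D)=\eta(D,v)=0$ on the cone to extract the conformal factor $e^{2l}$, and the same projected group action $y\mapsto \alpha.y/k(\alpha.y)$ with conformal factor $1/k^{2}(\alpha.y)$. The well-posedness caveats you raise at the end (transversality to the rays, non-vanishing of $k$, nondegeneracy of the induced metric) are legitimate points that the paper passes over in silence, but they do not change the argument.
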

\begin{proof}
 Let  
\begin{align*}
\Lambda: \setR^{n+2} &\to\setR^{n+2} \\
y &\mapsto e^{l(y)}y.
\end{align*}
We clearly have  $\Lambda X_f=X_k$, 
since, for $y\in X_f$, one has $e^{l(y)}y=e^{l(y)}y/f(y)=y/k(y)$ and $k\left(y/k(y)\right) = 1$. Moreover, $\Lambda$ induces a diffeomorphism $\Lambda_r$ between $X_f$ and $X_k$.

In order to determine the relation between $g^{f}$ and $ g^{k}$, the metrics induced from the ambient space on $X_f$ and on $X_k$, respectively,
let us consider $V \in T_y\setR^{n+2}$.
A straightforward calculation from the definition of the push forward leads to 
\begin{equation*}
\Lambda_*V =e^l(  V + \langle dl,V\rangle D).
\end{equation*}
This map induces the map $\Lambda_{r*}$ between $TX_f$ and $TX_k$, with the same expression on a vector field of  $TX_f$.

Now,  let $U'=\Lambda_{r*}U$,  $V'=\Lambda_{r*} V$, where $U,V\in TX_f$, and let $m$ and $n$ be the canonical injections 
from  $X_f$ and $X_k$, respectively, in $\setR^{n+2}$.
Then, noting that $\eta(D,U)=\eta(D,V)=\eta(D,D)=0$, one has successively
\begin{align*}
     g^k_{ \Lambda(y)}(U',V')&=(n^*\eta)_{ \Lambda(y)}(U',V')\\
    &=\eta(\Lambda_{r*}U,\Lambda_{r*}V)\\
    &=\eta\large(e^l(  V + \langle dl,V\rangle D), e^l(  V + \langle dl,V\rangle D)\large)\\
    &=e^{2l}\eta(U,V)\\
    &=e^{2l} \eta(m_*U,m_*V)\\
    &= e^{2l} m^*\eta(U,V)\\
    &=e^{2l}g^f(U,V).
\end{align*}
The map $\Lambda_{r*}$ is thus an isometry between $(X_f,e^{2l}g^f)$ and $(X_k,g^{k})$, proving the first assertion of the theorem.

We now consider the SO$(2,n)$ action on $X_k$. 
Let the action of SO$(2,n)$ on the set $X_k \subset \setR^{n+2}$ be
\begin{equation*}
    \alpha^k(y)=\frac{\alpha. y}{k(\alpha. y)}\ \in X_k,\quad
    \alpha\in \mathrm{SO}(2,n),
\end{equation*}
where $\alpha . y$ is the natural SO$(2,n)$ action in $\setR^{n+2}$.
We claim that this action on $X_k$  is a conformal transformation.

In fact, the action $(\alpha^k)'$ on $X_k$ is the tangent one. For $V \in TX_k$ one has
\begin{equation*}
    (\alpha^k)'(y)(V)=\frac{1}{k(\alpha.y)}\alpha. V
    -\frac{k'(\alpha. y)(\alpha.V)}{k^2(\alpha. y)}(\alpha. y).
\end{equation*}
Since $\alpha$ is isometric with respect to $\eta$, and $y$, viewed as a vector of $\setR^{n+2}$, is perpendicular to $V\in TX_k$ ---that is, $\eta(y,V)=\eta(y,y)=0$--- one has
\begin{equation*}
    \eta((\alpha^k)'(y)(V_1),(\alpha^k)'(y)(V_2))=
\frac{1}{k^2(\alpha. y)}\eta (V_1,V_2),\quad
    V_1, V_2 \in TX_k.
\end{equation*}
That is, $\alpha^k$ acts as a conformal transformation on $X_k$.
This completes the proof of the theorem.
\end{proof}

We now return to the problem of the natural embedding of FLRW spaces in  $\setR^{n+2}$.
We begin with the $n$-dimensional de~Sitter (dS) space $\Sigma$, with metric $g^{dS}$ and scalar (Ricci) curvature $R = - n(n-1) H^2$.
Let us define the FLRW space $W$ through the scale factor $a$ such that $g^W=e^{2a}g^{dS}$.

The de~Sitter space $\Sigma$ is first realized, as usual~\cite{Hawking:1973uf},  as the hypersphere of the pseudo-Euclidean space $\setR^{n+1}$ through the equation
\begin{equation*}
    y^\mu y_\mu- (y^n)^2= -H^{-2}. 
\end{equation*}
We then identify $\setR^{n+1}$ with the (hyper)plane of $\setR^{n+2}$ defined through $f(y)=Hy^{n+1}=1$. This realizes an isometric embedding of $\Sigma$ in $\setR^{n+2}$ (see 
\cite{Huguet:2022rxi} for a more general approach).
The key point is that $\Sigma$ is now defined as the intersection of the null cone of $\setR^{n+2}$ and the plane $f=1$:
\begin{equation*}
\left\{
\begin{aligned}
    C(y) &= y^\mu y_\mu -  (y^n)^2 +   (y^{n+1})^2 = 0,\\
    f(y) &= H y^{n+1} = 1.
\end{aligned}
    \right.
\end{equation*}
In other words, $\Sigma=X_f$.

\begin{proposition}\label{Prop-Main}
    Let $l$ be a homogeneous function of degree zero on $\setR^{n+2}$ whose restriction to $\Sigma$ is the function $a$.
    Let $W=e^l\Sigma$, then the induced metric on $W$ by $\eta$ of $\setR^{n+2}$ is the Weyl rescaling $g^{\sss W}=e^{2a}g^{\sss dS}$ of the de~Sitter metric $g^{\sss dS}$.  Moreover, the elements of the linear group SO$(2,n)$ on $\setR^{n+2}$ act on $W$ as conformal transformations.
\end{proposition}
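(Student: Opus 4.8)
The plan is to read the Proposition as a direct specialization of Theorem~\ref{TH-theorem}, so that almost no new computation is needed. First I would record the identifications already made in the text: de~Sitter space is $\Sigma=X_f$ for the degree-one function $f(y)=Hy^{n+1}$, and its induced metric is $g^f=g^{\sss dS}$. Taking $l$ to be the given degree-zero function and setting $k=e^{-l}f$, one checks that $k$ is homogeneous of degree one, since $k(\lambda y)=e^{-l(y)}\lambda f(y)=\lambda k(y)$, so that $X_k$ is a bona fide submanifold of the type considered in Sec.~\ref{Sec-GeometricSettings}. The map $\Lambda(y)=e^{l(y)}y$ of the theorem's proof then carries $\Sigma=X_f$ onto $X_k$, whence $W=e^l\Sigma=\Lambda(\Sigma)=X_k$, and the metric induced on $W$ by $\eta$ is exactly $g^k$.

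With these identifications in hand, I would simply quote the conclusion of Theorem~\ref{TH-theorem}: the restricted map $\Lambda_r$ is an isometry between $(X_f,e^{2l}g^f)$ and $(X_k,g^k)$, that is $g^{\sss W}=g^k=e^{2l}\tilde g^f$ with $\tilde g^f$ isometric to $g^{\sss dS}$. The one step requiring care is the rewriting of the Weyl factor $e^{2l}$ as $e^{2a}$, and this is where I expect the only real subtlety to lie: the theorem attaches the factor $e^{2l(y)}$ to the source point $y\in\Sigma$, where by hypothesis $l|_\Sigma=a$, and one must verify that transporting the metric identity through $\Lambda_r$ to the image point $\Lambda(y)\in W$ does not change it. This is guaranteed precisely by the degree-zero homogeneity of $l$: since $l(\lambda y)=l(y)$ for $\lambda>0$ and $\Lambda$ is a positive radial rescaling, $l(\Lambda(y))=l(y)=a$, so $l$ takes the same value at a point of $\Sigma$ and at its image on $W$. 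The Weyl factor is therefore unambiguously $e^{2a}$ whether read on $\Sigma$ or on $W$, and, under the isometric identification of $W$ with $\Sigma$, one obtains $g^{\sss W}=e^{2a}g^{\sss dS}$.

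For the conformal statement I would invoke the final assertion of Theorem~\ref{TH-theorem} with $X_k=W$: the action $\alpha^k(y)=\alpha. y/k(\alpha. y)$ of SO$(2,n)$ on $W$ is built from the $\eta$-isometry $\alpha$ together with the orthogonality relations $\eta(y,V)=\eta(y,y)=0$ valid on $X_k$, and these are exactly the ingredients that force the induced tangent map to rescale $g^k$ by the positive factor $k^{-2}(\alpha. y)$. Hence SO$(2,n)$ acts on $W$ by conformal transformations, completing the proof.
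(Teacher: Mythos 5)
Your proposal is correct and follows exactly the paper's route: the paper's own proof is the single observation that the Proposition is Theorem~\ref{TH-theorem} specialized to $\Sigma=X_f$ (with $f(y)=Hy^{n+1}$) and $W=X_k$ ($k=e^{-l}f$). Your additional checks --- that $k$ is homogeneous of degree one, that $W=\Lambda(\Sigma)=X_k$, and that degree-zero homogeneity of $l$ lets the Weyl factor $e^{2l}$ be read as $e^{2a}$ on either $\Sigma$ or $W$ --- are correct details the paper leaves implicit.
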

\begin{proof}
    This proposition is nothing but a specialization of the above theorem with $\Sigma=X_f$ and $W=X_k$.  
\end{proof}

One can ask why we could not specialize $l$ so that it does not depend on $y^{n+1}$,
eliminate that variable, and then obtain an embedding in $\setR^{n+1}$.
In that case,  the metric on $W$ would no longer be induced by that of $\setR^{n+1}$, and SO$(2,n)$ invariance
would be lost. Indeed, defining $W$ as a subset of the $\setR^{n+2}$ cone is key to the proof through isotropic vectors. 

Note that the same result can be obtained by continuous deformation of Minkowski or anti-de~Sitter (AdS) spaces instead of de~Sitter space. 
This amounts to choosing $f=Hy^n$ or $f=H(y^n+y^{n+1})/2$, respectively.

\section{Concluding remarks}\label{Sec-Remarks}

In the previous section, Sec.~\ref{Sec-Embeddings}, we built $k$, and thus the manifold $X_k$, from $f$ and the scale factor $a$.
One can ask about the converse problem that is, starting from a function $k$, homogeneous of degree one, is the manifold $X_k$ the continuous deformation 
of some de~Sitter space $X_f$, where $f$  is a homogeneous polynomial of degree one?
Globally, the answer is most likely negative.
Locally, however, $X_k$ can be obtained as a continuous deformation of dS or AdS spaces whose related defining planes $P_f$ are tangent to $X_k$. Precisely, let us 
consider a point $y_o$ of $X_k$ and set $K_o :=\nabla k(y_o)$: we assume that $(K_o)^2 \neq 0$, and the tangent plane 
at $y_o$ is defined through  $f(y)= K_o\cdot (y-y_o)+k(y_o)$. Then, 
from Theorem \ref{TH-theorem}, the metric $g^k$ is, locally, a continuous deformation of a dS or AdS metric depending on the sign of $(K_o)^2$ and hence
on the point $y_o$ considered on the manifold.

Concerning the scale factor $a$, it can be chosen as any continuous function of $y \in \Sigma$. As a consequence, by contrast with 
the scale factor appearing in the metric of FLRW spacetimes in (four-dimensional) cosmology (see, for instance, \cite{Ibison:2007dv} about different forms of the RW metric), it can describe
more general conformally flat spacetimes, in particular those with less symmetry than the FLRW space. 

In the most general, case the need to embed FLRW spaces in $\setR^{n+2}$, in contrast to simply $\setR^{n+1}$, has been pointed out in \cite{Akbar:2017vja}.
Here, in our works, considering $\setR^{n+2}$ to boot stems from the need to track how the SO$(2,n)$ group acts.


\end{document}